\newcommand{\expected}{\mathbb{E}}
\newcommand{\R}{\mathbb{R}}
\newcommand{\Z}{\mathbb{Z}}
\newcommand{\N}{\mathbb{N}}
\newcommand{\mc}[1]{\mathcal{#1}}
\newcommand{\bsone}{\boldsymbol{1}}
\newcommand{\col}{\mathrm{col}}
\newcommand{\inner}[2]{\langle #1, #2 \rangle}
\newtheorem{definition}{Definition}
\newtheorem{proposition}{Proposition}
\newtheorem{lemma}{Lemma}
\newtheorem{assumption}{Assumption}
\newtheorem{standing}{Standing Assumption}
\newacronym{GNEP}{GNEP}{Generalized Nash Equilibrium Problem}
\newacronym{LP}{LP}{linear program}
\newacronym{DeePC}{DeePC}{data-enabled predictive control}
\newacronym{LTI}{LTI}{linear time-invariant}
\newacronym{ABCD}{ABCD}{alternate block-coordinate descent}
\title{\LARGE \bf
	The optimal transport paradigm enables data compression in data-driven robust control
}
\author{Filippo Fabiani and Paul J. Goulart 
	\thanks{The authors are with the Department of Engineering Science, University of Oxford, OX1 3PJ, United Kingdom {\tt \footnotesize (\{filippo.fabiani, paul.goulart\}@eng.ox.ac.uk)}. This work was partially supported through the Government’s modern industrial strategy by Innovate UK, part of UK Research and Innovation, under Project LEO (Ref. 104781).}%
}
\begin{document}

	\maketitle
	\thispagestyle{empty}
	\pagestyle{empty}

	\begin{abstract}

	A new data-enabled control technique for uncertain linear time-invariant systems, recently conceived by Coulson et\ al., builds upon the direct optimization of controllers over input/output pairs drawn from a large dataset. We adopt an optimal transport-based method for compressing such large dataset to a smaller synthetic dataset of representative behaviours, aiming to alleviate the computational burden of controllers to be implemented online. Specifically, the synthetic data are determined by minimizing the Wasserstein distance between atomic distributions supported on both the original dataset and the compressed one. We show that a distributionally robust control law computed using the compressed data enjoys the same type of performance guarantees as the original dataset, at the price of enlarging the ambiguity set by an easily computable and well-behaved quantity. Numerical simulations confirm that the control performance with the synthetic data is comparable to the one obtained with the original data, but with significantly less computation required.

	\end{abstract}

	\section{Introduction}
	In recent years, traditional model-based controller design has been giving way to data-driven approaches.
	The growing complexity of modern control problems, which often precludes the use of classical system identification procedures, along with the increasing data storage capacity, make \textit{learning from data} a timely and attractive new direction for the system-and-control community.

	Two recent trends in control originated from different, but nearly complementary, views of a well-known result in subspace identification, i.e., the so called \emph{Willems' lemma} \cite{willems2005note}. Specifically, this technical result establishes that a nonparametric \gls{LTI} realization of an unknown system can be recovered from a data matrix of noise-free input/output measurements. Thus, an early research direction adopted the Willems' lemma to perform explicit data-dependent parametrization analysis and control of systems \cite{de2019formulas,van2020data,berberich2020robust}. The second more recent direction, which is central in this paper, designed data-enabled (robust) optimal controllers without this intermediate modelling step \cite{coulson2019data,coulson2019regularized,berberich2020data}.

	To ensure that the column space of the data matrix spans all possible trajectories of a corresponding \gls{LTI} system, however, the Willems' lemma relies on the \emph{persistent excitation} of the system input. This directly translates to a requirement on the input/output observations length, and therefore one might be induced to collect extremely large datasets. As an undesired consequence, especially for the second line of research introduced above, this may pose several computational challenges to the real-time implementation of optimization-based control laws, thus limiting the scope of possible applications.
	Systems that require fast sampling rates, for instance, may not be amenable to this new approach to control design. Moreover, practitioners may choose short control horizons to alleviate the computational burden, thereby resulting in  poor control performance or even system instability. Designing suitable procedures to select the most representative data within large (possibly noisy) datasets is essential to overcome these problems.

	Data compression and dimensionality reduction have been widely adopted in the systems-and-control community to extrapolate synthetic sets of samples that ``best'' capture the information content of noise-corrupted datasets. Statistical and manifold learning \cite{roweis2000nonlinear, bartlett2006convexity}, principal component analysis \cite{wang2002new,wang2006closed,scholkopf1997kernel}, or subspace identification approaches \cite{van1997closed,mckelvey1996subspace,jansson1998consistency}, represent only a few of the most popular techniques. Conceptually, these methods tacitly neglect the stochastic nature of the noise itself, eliminating the less significant components (according to some criterion), or those potentially associated with noise, at a later stage only.

	In contrast to the aforementioned literature, we propose to design a synthetic set of samples via an offline procedure based on optimal transport \cite{villani2003topics,peyre2019computational} techniques. Specifically, the \emph{Wasserstein distance} represents the key tool to define an optimization problem that minimizes the distance between the discrete measure associated with our synthetic dataset and the empirical distribution of the original noisy data. The resulting \gls{LP} turns out to be a variational Wasserstein problem \cite[\S 9]{peyre2019computational}, wherein each atom of the synthetic dataset can be identified with a specific barycentre for a cluster of the original samples, thereby serving as a proxy a whole set of observed system behaviours. In the context of the data-enabled robust control technique of \cite{coulson2019regularized}, we show that a control law computed via the repeated solution of a distributionally robust optimization problem, built upon a Wasserstein ball as an ambiguity set, enjoys exactly the same type of performance guarantees as a controller designed using the full original dataset, at the price of enlarging the ambiguity radius by a quantity that depends on the number of synthetic atoms adopted.  We show that this additional ambiguity radius vanishes as the number of atoms in our synthetic dataset tends to the cardinality of the original dataset.

	The paper is organized as follows: we recall some fundamentals of optimal transport in \S II, and  introduce the robust control problem addressed in \S III. We formalize and discuss the variational Wasserstein problem associated with the design of synthetic datasets in \S IV. Finally, we compare the control performances obtained over the original and compressed datasets through numerical simulations in \S V.

	\smallskip
	\subsubsection*{Notation}
	For vectors $(x_1, \dots, x_N) \in\mathbb{R}^n$ and $\mc I \coloneqq \{1,\dots,N \}$, we denote $\mathrm{col}((x_i)_{i\in\mc I}) \coloneqq (x_1 ^\top,\dots ,x_N^\top )^\top$. Given a matrix $X \in \R^{n \times m}$, its $(i,j)$ entry is denoted by $[X]_{i,j}$. The symbol $\inner{\cdot}{\cdot}$ denotes an inner product in the appropriate space, i.e., $\inner{x}{y} = x^\top y$ for $(x, y) \in \R^n$ and $\inner{X}{Y} = \textrm{trace}\left(X^\top Y\right)$ for $(X, Y) \in \R^{n \times m}$. The probability simplex $\Sigma_n \coloneqq \{\sigma \in \R_+^n \mid \bsone_n^\top \sigma = 1 \}$, where $\bsone_n$ is $n$-dimensional vector of elements equal to $1$. For any point $x \in \R^n$, $\delta_x$ is the Dirac unit mass on $x$. Given a collection of points $\{x_i\}_{i \in \mc{I}} \in  \R^n$, $\textrm{conv}(\{x_i\}_{i \in \mc{I}})$ represents their convex hull, while $\hat{\mathds{P}} = \frac{1}{|\mathcal{I}|}\sum_{i \in \mc{I}}^{} \delta_{x_i}$ is the associated uniform discrete probability distribution. The dual norm of an arbitrary norm $\|\cdot\|$ on $\R^n$ is $\|x\|_\ast \coloneqq \textrm{sup}_{\|y\| \leq 1} \, \inner{x}{y}$. The conjugate function of $f : \R^n \to \R$ is defined by $f^\ast(\xi) \coloneqq \textrm{sup}_{x \in \R^n} \, \inner{\xi}{x} - f(x)$. Quantities denoted with $ (\hat{\cdot})$ are either directly measured, or depend on data.


	\section{Fundamentals of optimal transport}
	We first recall the definition of the Wasserstein distance for continuous measures. Then, by considering discrete distributions, we define the associated optimal transport problem that we will apply to our synthetic and empirical datasets.

	\subsection{Wasserstein distance between probability measures}
	Let $\Omega$ be an arbitrary space endowed with a metric $d$, and $\mc{P}(\Omega)$ be the set of Borel probability measures on $\Omega$.

	\smallskip
	\begin{definition}(\hspace{-.01em}\textup{\cite[Ch.~7]{villani2003topics}})\label{def:wass_dist}
		Given any $p \in [1, +\infty)$, the \emph{$p$-Wasserstein distance} $\mc{W}_p : \mc{P}(\Omega) \times \mc{P}(\Omega) \to \R_+$ between two probability measures $(\mathds{P}, \mathds{Q}) \in \mc{P}(\Omega)$ is defined as
		\begin{equation}\label{eq:wass}
		\mc{W}_p(\mathds{P}, \mathds{Q}) \coloneqq \left( \underset{\pi \in \Pi(\mathds{P}, \mathds{Q})}{\textrm{inf}} \, \int_{\Omega \times \Omega}^{} d^p(x,y) \, d \pi(x,y) \right)^{1/p},
		\end{equation}
		where $\Pi(\mathds{P}, \mathds{Q})$ denotes the set of all probability measures on $\Omega \times \Omega$ that have marginals $\mathds{P}$ and $\mathds{Q}$, respectively.
		\hfill$\square$
	\end{definition}
	\smallskip

	Roughly speaking, the decision variable $\pi$ of the infinite-dimensional optimization problem in \eqref{eq:wass} coincides with a transportation plan for moving a mass distribution described by $\mathds{P}$ to another one described by $\mathds{Q}$, while $d$ is the associated transportation cost. Then, given any $\varepsilon > 0$, we define the \emph{Wasserstein ball} of radius $\varepsilon$, centred around the distribution $\mathds{P}$, as $\mc{B}_{\varepsilon}(\mathds{P}) \coloneqq \{\mathds{Q} \in \mc{P}(\Omega) \mid \mc{W}_p(\mathds{P}, \mathds{Q}) \leq \varepsilon \}$.

	\subsection{Discrete probability distributions}
	Now, let us consider two families of $N$ and $M$ points in $\Omega$, i.e., $X = \{x_1, \ldots, x_N\}$ and $Y = \{y_1, \ldots, y_M\}$, respectively.
	Given weights $\alpha \in \Sigma_N$, $\beta \in \Sigma_M$, we can construct discrete probability distributions $\hat{\mathds{P}}$ and $\hat{\mathds{Q}}$ as $\hat{\mathds{P}} = \sum_{i \in \mc{N}}^{} \alpha_i \delta_{x_i}$ and $\hat{\mathds{Q}} = \sum_{i \in \mc{M}} \beta_i \delta_{y_i}$, where $\mc{N} \coloneqq \{1,\ldots,N\}$, $\mc{M} \coloneqq \{1,\ldots,M\}$. In this special case, the Wasserstein distance happens to correspond to the optimal value of a network problem, as \eqref{eq:wass} translates to the following \gls{LP} \cite{peyre2019computational}:

	\begin{equation}\label{eq:discrete_transp}
	\mc{W}_p(\hat{\mathds{P}}, \hat{\mathds{Q}}) = \underset{T \in \mc{T}(\alpha, \beta)}{\textrm{min}} \, \inner{T}{D(X,Y)}.
	\end{equation}

	Here, $D \in \R^{N \times M}$ is the matrix of pairwise distances between points in $X$ and $Y$, raised to the power $p$, defined as $[D]_{i,j} \coloneqq d^p(x_i, y_j)$, for all $x_i \in X$ and $y_j \in Y$. Moreover, every element of the decision matrix $T \in \R^{N \times M}$ in \eqref{eq:discrete_transp}, i.e., $[T]_{i,j} \eqqcolon t_{i,j}$, determines the coupling between pairs $(x_i,y_j) \in X \times Y$, whose value specifies the amount of mass flowing from the point $x_i \in X$ toward the point $y_j \in Y$. For any $\alpha \in \Sigma_N$ and $\beta \in \Sigma_M$, the admissible couplings lie in the feasible set $\mc{T} (\alpha, \beta)$, defined as
	\begin{equation}\label{eq:trans_poly}
	\mc{T} (\alpha, \beta) \coloneqq \{T \in \R_+^{N \times M} \mid T \, \bsone_M = \alpha, \, T^\top \, \bsone_N = \beta \}.
	\end{equation}

	The set of matrices in \eqref{eq:trans_poly} is called the \emph{transportation polytope}, since it is convex, bounded and defined by a set of $N + M$ equality constraints. Note that as long as $\mc{T} (\alpha, \beta)$ is nonempty, the solution to the \gls{LP} in \eqref{eq:discrete_transp}, attained on the vertices of $\mc{T}$, may not be unique. Finally, since $\mc{W}_p$ defines a metric, in the discrete setting \eqref{eq:discrete_transp} we have $\mc{W}_p(\hat{\mathds{P}}, \hat{\mathds{Q}}) = 0$ if and only if $\alpha = \beta$ \cite[Prop.~2.2]{peyre2019computational}, and therefore the triangle inequality holds as $\mc{W}_p(\hat{\mathds{P}}, \hat{\mathds{Q}}) \leq \mc{W}_p(\hat{\mathds{P}}, \hat{\mathds{G}}) + \mc{W}_p(\hat{\mathds{G}}, \hat{\mathds{Q}})$, for any discrete measures $\hat{\mathds{P}}$, $\hat{\mathds{Q}}$, $\hat{\mathds{G}} \in \mc{P}(\Omega)$.

 We henceforward focus on the \emph{Kantorovich-Rubinstein distance} obtained by setting $p = 1$, and consequently write $\mathcal{W}(\mathbb{P},\mathbb{Q})$ without subscript.  We will assume that the metric $d$ is induced by an arbitrary norm $\|\cdot\|$ on $\R^n$.

	\section{The data-enabled control paradigm}
	We start by formalizing the optimal control problem we wish to consider. After a brief digression on nonparametric models for deterministic \gls{LTI} systems, we will recall from \cite{coulson2019regularized} a distributionally robust reformulation of the control problem.

	\subsection{Constrained optimal control of uncertain \gls{LTI} systems}
	We consider discrete time, stochastic systems in the form:

	\begin{equation}\label{eq:stoc_LTI}
	\left\{
	\begin{aligned}
	x(k+1) &= A x(k) + B u(k) + E \nu(k),\\
	y(k) &= C x(k) + D u(k) + F \nu(k),
	\end{aligned}
	\right.
	\end{equation}
	where $A \in \R^{n \times n}$, $B \in \R^{n \times m}$, $E \in \R^{n \times q}$, $C \in \R^{\ell \times n}$, $D \in \R^{\ell \times m}$ and $F \in \R^{\ell \times q}$. The state, control input, output and disturbance at time instant $k \in \Z$ are $x(k) \in \R^n$, $u(k) \in \R^m$, $y(k) \in \R^\ell$ and $\nu(k) \in \R^q$, respectively. The uncertainty $\nu(k)$ is drawn from an unknown probability distribution $\mathds{P}_{\nu}$, supported on $\Upsilon \subseteq \R^{q}$. We will assume throughout that the system matrices defining \eqref{eq:stoc_LTI} are unknown, and that we have access to input/output measurements only, i.e., $(\hat{u}(k), \hat{y}(k))$, $k \in \Z$. Note that, in view of the dynamics in \eqref{eq:stoc_LTI}, any output measurement $\hat{y}(k)$ can be affected by the realization of the stochastic disturbance $\nu(k)$, for any $k \in \Z$.

	A typical approach to steer the behaviour of \eqref{eq:stoc_LTI}, particularly in the presence of state or input constraints, is stochastic model predictive control \cite{mesbah2016stochastic}. To this end, we consider a finite horizon control problem over horizon length $K \in \N$, where we aim to design a constrained sequence of control inputs, i.e., $u \coloneqq \col(u(k), \ldots, u(k+K-1)) \in \mc{U}$, for some compact, convex set $\mc{U} \subseteq \R^{mK}$, while minimizing a predefined cost function $J : \R^{mK} \times \R^{\ell K} \to \R$. Specifically, in view of the uncertain nature of \eqref{eq:stoc_LTI}, the finite horizon control problem translates into the following stochastic program:

	\begin{equation}\label{eq:stoc_opt_prob}
	\underset{u \in \mc{U}}{\textrm{inf}} \,\, \expected_{\mathds{P}_{\nu}^K} [J(u, y)],
	\end{equation}
	where $\mathds{P}_{\nu}^K \coloneqq \mathds{P}_{\nu} \times \ldots \times \mathds{P}_{\nu}$ is the $K$-fold product distribution characterizing $\nu$ over the whole horizon $K$. Next, we formulate the same working assumptions as in \cite{coulson2019regularized}.
	\smallskip
	\begin{standing}
		The pair $\!(\!A,B )\!$ is controllable.
		\hfill$\square$
	\end{standing}

	\smallskip
	\begin{standing}\label{standing:cost_fun}
		For all $(u,y) \in \R^{m K} \times \R^{\ell K}$, $J(u,y)$ is a separable function, namely $J(u,y) \coloneqq J_1(u) + J_2(y)$, where $J_1 : \R^{mK} \to \R$, $J_2 : \R^{\ell K} \to \R$ are convex and continuous. In addition, $J_2$ is such that $\Xi \coloneqq \{\xi \in \R^{\ell K} \mid J^\ast_2(\xi) < \infty \} \subseteq \R^{\ell K}$ is a bounded set.
		\hfill$\square$
	\end{standing}

	\subsection{Nonparametric models for deterministic \gls{LTI} systems}
	Let us first consider a deterministic version of the system in \eqref{eq:stoc_LTI}, i.e., with $\nu(k) = 0$, for any $k \in \Z$. The Willems' fundamental lemma \cite[Th.~1]{willems2005note} provides the theoretical means to construct data-consistent, minimal, nonparametric models for unknown, deterministic \gls{LTI} systems \cite{coulson2019regularized,de2019formulas,berberich2020robust,van2020data}. Specifically, assume that we have available (noise-free) data from an experiment of length $N$, $(\{\hat{u}(i)\}_{i = 0}^{N-1}, \{\hat{y}(i)\}_{i = 0}^{N-1})$, for different time shifts. Without loss of generality, assume also that $k = 0$ corresponds to the instant of initial observation. This data can then be organized within a matrix $\mathscr{H}_{K} \coloneqq \col(\hat{\mathscr{U}}_{0,K,N-K+1}, \hat{\mathscr{Y}}_{0,K,N-K+1}) \in \R^{(m+\ell)K \times N-K+1}$, where
	\begin{equation}\label{eq:Hankel}
			\hat{\mathscr{U}}_{0,K,N-K+1} \coloneqq \left[\begin{array}{cccc}
			\hat{u}(0) & \hat{u}(1) & \cdots & \hat{u}(N \! - \!K)\\
			\hat{u}(1) & \hat{u}(2) & \cdots & \hat{u}(N \!-\! K \!+\! 1)\\
			\vdots & \vdots & \ddots & \vdots\\
			\hat{u}(K \!-\! 1) & \hat{u}(K) & \cdots & \hat{u}(N \!-\! 1)
		\end{array}
		\right]
	\end{equation}
	and $\hat{\mathscr{Y}}_{0,K,N-K+1} \in \R^{\ell K \times N-K+1}$ is defined similarly. The first subscript refers to the time index of the top-left entry of a given matrix, the second refers to the number of block-rows, and the third to the number of columns. Note that both $\hat{\mathscr{U}}_{0,K,N-K+1} \in \R^{m K \times N-K+1}$ and $\hat{\mathscr{Y}}_{0,K,N-K+1}$ have constant vector entries along the block anti-diagonals, and therefore $\mathscr{H}_{K}$ belongs to the class of block-Hankel matrices.

	The Willems' fundamental lemma restricts the class of input sequences over the horizon $K$ to the persistently exciting signals, as defined next.

	\smallskip
	\begin{definition}(\hspace{-.01em}\textup{\cite{willems2005note}}) \label{def:persistent}
		 A measured signal $\hat{z} \in \R^{w}$, observed over $N$ samples, is \emph{persistently exciting} of order $K$ if the corresponding Hankel matrix $\hat{\mathscr{Z}}_{0,K,N-K+1}$, defined equivalently to \eqref{eq:Hankel}, has full rank $wK$.
		\hfill$\square$
	\end{definition}
	\smallskip


	It follows that for a signal to be persistently exciting of order $K$, its length $N$ must satisfy $N \geq (w+1)K-1$. Then, by relying on Definition~\ref{def:persistent}, we restate the Willems' fundamental lemma as follows:

	\smallskip
	\begin{lemma}(\hspace{-.01em}\textup{\cite[Th.~1]{willems2005note}}) \label{lemma:willems}
%
		Let $\col({\{\hat{u}(i)\}_{i = 0}^{N-1}})$ be a persistently exciting control signal of order $n + K$. Then $\col(u, y)$ is a $K$-long input/output trajectory of the deterministic version of the system in \eqref{eq:stoc_LTI} if and only if  $\col(u, y) \in \textrm{Im}(\mathscr{H}_{K})$.
		\hfill$\square$
	\end{lemma}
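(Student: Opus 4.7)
The statement is an ``if and only if'', and I would prove the two directions separately. The converse $(\Leftarrow)$ is essentially bookkeeping from linearity of the dynamics, while the direct implication $(\Rightarrow)$ reduces to a rank claim on $\mathscr{H}_K$ for which the persistent excitation hypothesis is crucial.

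For $(\Leftarrow)$, suppose $\col(u,y) = \mathscr{H}_K g$ for some $g \in \R^{N-K+1}$. Each column of $\mathscr{H}_K$ is, by construction, a $K$-long window of the measured input/output signal and therefore a $K$-long trajectory of the deterministic version ($\nu \equiv 0$) of \eqref{eq:stoc_LTI}. Since those dynamics are linear and time-invariant, any real linear combination of trajectories is again a trajectory, and hence $\col(u,y)$ is itself a $K$-long trajectory.

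For $(\Rightarrow)$, I would run a dimension-counting argument. The set of all $K$-long input/output trajectories of the deterministic system is a linear subspace $\mathscr{V}_K \subseteq \R^{(m+\ell)K}$ parametrized bijectively by the initial state $x(0) \in \R^n$ and the input sequence $u \in \R^{mK}$; therefore $\dim \mathscr{V}_K = mK + n$. The $(\Leftarrow)$ direction already gives $\mathrm{Im}(\mathscr{H}_K) \subseteq \mathscr{V}_K$, so it suffices to establish $\mathrm{rank}(\mathscr{H}_K) \geq mK + n$ in order to conclude $\mathrm{Im}(\mathscr{H}_K) = \mathscr{V}_K$, which is exactly the ``only if'' part.

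The main obstacle is precisely this rank bound, and here persistent excitation of order $n+K$ enters. I would introduce the (unmeasured) state sequence $\hat{X}_0 \coloneqq [\hat{x}(0), \ldots, \hat{x}(N-K)] \in \R^{n \times (N-K+1)}$ consistent with the data and invoke the standard state-space decomposition $\hat{\mathscr{Y}}_{0,K,N-K+1} = \mc{O}_K \hat{X}_0 + \mc{M}_K \, \hat{\mathscr{U}}_{0,K,N-K+1}$, with $\mc{O}_K$ the extended observability matrix of order $K$ and $\mc{M}_K$ the block-Toeplitz matrix of Markov parameters. Elementary block row operations then give
\begin{equation*}
\mathrm{rank}(\mathscr{H}_K) = \mathrm{rank}\!\left(\begin{bmatrix} \hat{\mathscr{U}}_{0,K,N-K+1} \\ \hat{X}_0 \end{bmatrix}\right).
\end{equation*}
To show this right-hand side equals $mK + n$, I would take an arbitrary pair $(\eta^\top,\xi^\top)$ in the left null space, expand each column of $\hat{X}_0$ through the reachability map in terms of the initial state $\hat{x}(0)$ and the $n$ preceding input samples, and then exploit persistent excitation of order $n+K$ together with controllability of $(A,B)$ to force $\eta = 0$ and $\xi = 0$. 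The reason the excitation order must be $n+K$ rather than $K$ is precisely that absorbing the state contribution $\xi^\top \hat{X}_0$ into an enlarged input Hankel requires $n$ additional rows of past input depth, which is exactly what Definition~\ref{def:persistent} supplies.
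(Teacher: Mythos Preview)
The paper does not prove this lemma; it is quoted verbatim from \cite[Th.~1]{willems2005note} and used as a black box, so there is no in-paper argument to compare your proposal against.

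Your sketch is the standard state-space route to Willems' fundamental lemma and is essentially sound. One point worth tightening if you write it out in full: both the claim $\dim \mathscr{V}_K = mK + n$ and the block-row reduction $\mathrm{rank}(\mathscr{H}_K) = \mathrm{rank}\bigl(\bigl[\begin{smallmatrix}\hat{\mathscr{U}}_{0,K,N-K+1}\\ \hat{X}_0\end{smallmatrix}\bigr]\bigr)$ silently use that $\mc{O}_K$ has full column rank $n$, i.e.\ that the realization is observable (equivalently, that $n$ is the order of a minimal realization). The paper only assumes controllability of $(A,B)$ explicitly; the original result in \cite{willems2005note} is stated for controllable behaviours, which in the input/state/output setting amounts to minimality. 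Without that, $n$ should be replaced by $\mathrm{rank}(\mc{O}_K)$ throughout your dimension count, and the two directions still match.
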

	\smallskip

		%
	Lemma~\ref{lemma:willems} establishes that if $N$ is chosen large enough and signals are persistently exciting, then every realisable input/output trajectory of the deterministic system is a linear combination of collected input/output data.  In other words, for any $K$-long input/output trajectory $\col(u, y)$, there will always exist some $g \in \R^{N-K+1}$ such that $\col(u, y) = \mathscr{H}_{K} g$.


	By making use of Lemma~\ref{lemma:willems}, our goal is next to restate the deterministic version of the finite horizon control problem in \eqref{eq:stoc_opt_prob} by rearranging a measured $N$-long, input/output trajectory, $\col(\hat{u},\hat{y}) \coloneqq (\{\hat{u}(i)\}_{i = 0}^{N-1}, \{\hat{y}(i)\}_{i = 0}^{N-1})$. Specifically, as in \cite[\S III.A]{coulson2019regularized}, by starting from the current time $k \in \N$, we assume that the control input $\hat{u}$ is persistently exciting of order $K_i + K + n$, for some $K_i \in \N$. We then split $\mathscr{H}_{K_i+K}$ into block matrices $\hat{\mathscr{U}}_f$, $\hat{\mathscr{Y}}_f$, $\hat{\mathscr{U}}_b$ and $\hat{\mathscr{Y}}_b$. Here, $\hat{\mathscr{U}}_f \in \R^{mK \times N- (K_i + K)+1}$ and $\hat{\mathscr{Y}}_f \in \R^{\ell K \times N- (K_i + K)+1}$ consist of the last $K$-block rows of $\mathscr{H}_{K_i+K}$, corresponding to data matrices for the ``forward'' propagation (i.e., from $k \in \N$ onward) of control sequence and output prediction, respectively. Conversely, $\hat{\mathscr{U}}_b \in \R^{m K_i \times N- (K_i + K)+1}$ and $\hat{\mathscr{Y}}_b \in \R^{\ell K_i \times N- (K_i + K)+1}$, which correspond to the first $K_i$-block rows of $\mathscr{H}_{K_i+K}$, define the consistency constraints associated with less recent measurements (i.e., ``backward'' data), together with $\hat{u}_i \coloneqq \col(\hat{u}(k - K_i),\ldots,\hat{u}(k-1))$ and $\hat{y}_i \coloneqq  \col(\hat{y}(k-K_i),\ldots,\hat{y}(k-1))$. Thus, the deterministic version of \eqref{eq:stoc_opt_prob} follows directly from Lemma~\ref{lemma:willems} and reads as:

	\begin{equation}\label{eq:det_opt_prob}
	\left\{
	\begin{aligned}
	&\underset{g}{\textrm{min}} & & J(\hat{\mathscr{U}}_f g, \hat{{\mathscr{Y}}}_f g)\\
	&\textrm{ s.t. } & &
	\left[\begin{array}{c}
	\hat{{\mathscr{U}}}_b\\
	\hat{{\mathscr{Y}}}_b
	\end{array}
	\right] g = \left[\begin{array}{c}
	\hat{u}_i\\
	\hat{y}_i
	\end{array}
	\right],\\
	&&& \hat{{\mathscr{U}}}_f g \in \mc{U},
	\end{aligned}
	\right.
	\end{equation}
	where the decision variable $g$ belongs to $\R^{N - (K_i + K) + 1}$.

	\subsection{A distributionally robust data-enabled control problem}
	For uncertain systems, the optimization problem in \eqref{eq:det_opt_prob} is complicated by the realization of noise terms $\nu(k)$
	drawn from the distribution $\mathds{P}_\nu$.  In particular, system noise complicates satisfaction of the consistency constraint $\hat{{\mathscr{Y}}}_b g = \hat{y}_i$.

	As proposed in \cite{coulson2019regularized}, a possible approach is to soften this consistency constraint, directly penalizing the term $\hat{{\mathscr{Y}}}_b g - \hat{y}_i$ in the cost function as follows
	\begin{equation}\label{eq:soften}
		\underset{g \in \mc{G}}{\textrm{min}} \; J(\hat{\mathscr{U}}_f g, \hat{{\mathscr{Y}}}_f g) + \rho \|\hat{\mathscr{Y}}_b g - \hat{y}_i\|_1,
	\end{equation}
	where $\mc{G} \coloneqq \{g \in \R^{N-(K_i + K)+1} \mid \hat{\mathscr{U}}_f g \in \mc{U}, \hat{\mathscr{U}}_b g = 	\hat{u}_i \}$ depends on input measurements only. The optimization problem in \eqref{eq:soften} can be manipulated to obtain a distributionally robust, semi-infinite reformulation. Specifically, we note that all random objects can be gathered into a matrix $\left[ \begin{smallmatrix}
		\mathscr{Y}_b & y_i\\
		\mathscr{U}_f & 0
	\end{smallmatrix} \right]$, whose $j$-th row is denoted by $\kappa^\top_j$. Any such row corresponds to a random vector distributed according to some probability $\mathds{P}_{\kappa_j}$, and supported on $\Theta_{k_j} \subseteq \R^{N - (K_i + K) + 2}$, for all $j \in \{1, \ldots, \ell(K_i + K)\}$. Note that every $\mathds{P}_{\kappa_j}$ and $\Theta_{k_j}$ is determined starting from the unknown distribution $\mathds{P}_\nu$ and support $\Upsilon$, respectively.
Define $\kappa \coloneqq \col((\kappa_j)_{j = 1}^{\ell(K_i + K)})$, a random vector supported on $\Theta \coloneqq \prod_{j = 1}^{\ell(K_i + K)} \Theta_{k_j} \subseteq \R^{\ell(K_i + K)(N - (K_i+K) + 2)}$ and distributed according to $\mathds{P}_\kappa \coloneqq \prod_{j = 1}^{\ell(K_i + K)} \mathds{P}_{\kappa_j}$, and let $v \coloneqq \col(g, -1)$ and $\mc{V} \coloneqq \mc{G} \times \{-1\} \subseteq \R^{N-(K_i + K)+2}$. With the  notation introduced, and taking into account the realization of the random objects in $\kappa$, the cost function in \eqref{eq:soften} turns out to be $J((\hat{\mathscr{U}}_f, 0) v, (\hat{\kappa}_{\ell K_i + 1} v, \ldots, \hat{\kappa}_{\ell (K_i + K)} v)) + \rho \| (\hat{\kappa}_1 v, \ldots, \hat{\kappa}_{\ell K_i} v) \|_1 \eqqcolon f(\hat{\kappa},v)$. Therefore, based on the measurements $\hat{\kappa}$, the so-called \emph{in-sample performance} of \eqref{eq:soften} is
	$
		\textrm{min}_{v \in \mc{V}} \; \expected_{\hat{\mathds{P}}_\kappa} [f(\kappa,v)],
	$
	and admits the following distributionally robust, semi-infinite variation over the Wasserstein ball centred at the empirical distribution $\hat{\mathds{P}}_\kappa$:
	\begin{equation}\label{eq:distrib_robust}
		\underset{v \in \mc{V}}{\textrm{inf}} \; \underset{\mathds{Q} \in \mc{B}_{\varepsilon}(\hat{\mathds{P}}_\kappa)}{\textrm{sup}} \, \expected_{\mathds{Q}} [f(\kappa,v)].
	\end{equation}

	The optimal value of \eqref{eq:distrib_robust} is known to upper bound the \emph{out-of-sample performance}, $\expected_{\mathds{P}_\kappa} [f(\kappa, v)]$, with high confidence \cite{esfahani2018data,coulson2019regularized}. Note that $\expected_{\mathds{P}_\kappa} [f(\kappa, v)]$ denotes the quantity of interest in studying \eqref{eq:stoc_opt_prob}, as it depends on the unknown distribution $\mathds{P}_\kappa$. We will assume that this distribution is light-tailed, which is key to the results in \cite{esfahani2018data,coulson2019regularized}:
	\smallskip
	\begin{assumption}\label{ass:light_tailed}
		There exists some $a > 0$ such that $\expected_{\mathds{P}_{\kappa}} [e^{{\|\kappa\|^a}}] \coloneqq \int_{\Theta}^{} e^{{\|\kappa\|^a}} \, \mathds{P}_{\kappa}(d \kappa) < \infty$.
		\hfill$\square$
	\end{assumption}
	\smallskip

 Under Assumption~\ref{ass:light_tailed}, which is satisfied automatically if $\Theta$ is compact, \cite[Th.~3.5]{esfahani2018data} establishes that for any given confidence parameter $\beta > 0$, there exists some data-driven ambiguity radius, $\varepsilon = \varepsilon(\beta) > 0$, which guarantees the following probabilistic bound
	\begin{equation}\label{eq:performance_orig}
			\mathds{P}^{K_i + K}_{\kappa}\left\{ \expected_{\mathds{P}_{\kappa}} [f(\kappa,v)] \leq \underset{\mathds{Q} \in \mc{B}_{\varepsilon}(\hat{\mathds{P}}_\kappa)}{\textrm{sup}} \ \expected_{\mathds{Q}} [f(\kappa,v)] \right\}	\geq 1 - \beta.
	\end{equation}

	However, solving the problem \eqref{eq:distrib_robust} is not trivial to solve since it is semi-infinite. Using the results in \cite{esfahani2018data}, \cite[Th.~4.2]{coulson2019regularized} shows that considering $\mc{B}_{\varepsilon}(\hat{\mathds{P}}_\kappa)$ as an ambiguity set in \eqref{eq:distrib_robust} allows for a finite, convex reformulation.
	Specifically, the optimal value of \eqref{eq:distrib_robust} is upper bounded by
	\begin{equation}\label{eq:opt_reformulation}
	\underset{v \in \mc{V}}{\textrm{min}} \; \left[f(\hat{\kappa}, v) + \varepsilon \cdot \textrm{max} \left( \underset{\xi \in \Xi}{\textrm{sup}} \; \|\xi\|_\infty \| \col(g,0) \|_\ast, \rho \|v\|_\ast \right)\right].
	\end{equation}

	Thus, by denoting $v^\star$ as an optimal solution to \eqref{eq:opt_reformulation}, the control law $u^\star = \hat{\mathscr{U}}_f g^\star$ enjoys the data-driven probabilistic guarantees in \eqref{eq:performance_orig}, obtaining good control performance with respect to the possible realizations of the stochastic output trajectory $y$ associated with the ambiguity set $\mc{B}_{\varepsilon}(\hat{\mathds{P}}_\kappa)$.

	Some consideration of the optimization problem in \eqref{eq:opt_reformulation} is in order. First, we note that the cost function is convex since it corresponds to the sum of a (separable) convex function, $f$, 
	and the pointwise maximum between dual norms, and is therefore also convex. Moreover, by defining $\zeta(\cdot) \coloneqq \rho \, \|\cdot\|_1$, $ \textrm{max} \left( \textrm{sup}_{\xi \in \Xi} \; \|\xi\|_\infty \| \col(g,0) \|_\ast, \rho \|v\|_\ast \right)$ is equivalent to
	$$
		\left\{
		\begin{aligned}
		&\underset{\lambda \geq 0}{\textrm{inf}} & & \lambda \varepsilon\\
		&\textrm{ s.t. } & &	\textrm{sup}_{\xi \in \Xi} \; \|\xi\|_{\infty} \|\col(g,0)\|_{\ast} \leq \lambda,\\
		&&& \textrm{sup}_{\xi \in \Xi'} \; \|\xi\|_{\infty} \|v\|_{\ast} \leq \lambda,
		\end{aligned}
		\right.
	$$
	where $\Xi' \coloneqq \{\xi \in \R^{\ell K} \mid \zeta^\ast(\xi) < \infty \}$ is a bounded set. In fact, by the definition of the conjugate function, $\zeta^\ast(\xi) = 0$ if $\|\xi\|_{\infty} \leq \rho$, while $\zeta^\ast(\xi) = \infty$ otherwise.  It therefore follows that \eqref{eq:opt_reformulation} amounts to solving a conic optimization problem where the dimension of the decision variable $v$ is $N - (K_i + K) + 1$, and $N$ is a design parameter chosen so that $N \geq (m + 1)(n+ K_i + K) -1$, i.e., the persistent excitation assumption is satisfied.
	The large amount of data that one should collect may pose several challenges in the online implementation of distributionally robust controllers. Alternatively, one may also choose short control horizons $K$, resulting in poor control performance or instability of the controlled system. This motivates us to develop a procedure to compress the information brought by the dataset $\mathscr{H}_{K_i+K}$ into a smaller synthetic set of representative system behaviours. In the next section, we formalize the data compression problem as a variational Wasserstein problem, proposing a solution procedure to design such a synthetic  dataset.

	\section{Data compression as a variational Wasserstein problem}

	In this section we propose an offline, optimal transport-based procedure that can compress a possibly large dataset of system trajectories to a smaller, synthetic one of representative behaviours, i.e., to select a limited set of synthetic representative samples that ``best'' summarize the information content.
	This clearly affects the control problem addressed by imposing a lower computational burden in solving an optimization problem similar to \eqref{eq:opt_reformulation}, hence making distributionally robust control approaches more appealing for online implementation.
	Moreover, we show that the optimal solution obtained by means of our synthetic dataset enjoys probabilistic guarantees of the same type in \eqref{eq:performance_orig} on a Wasserstein ball with an enlarged ambiguity radius. Essentially, the marginal increase in the radius of the Wasserstein ball depends on the number of samples adopted, and vanishes as the size of the synthetic dataset grows to that of the original dataset.

	Let the dimensions of a matrix of input/output measurements over some horizon $L \in \N$, $\mathscr{H}_L \in \R^{r \times R}$, be fixed, i.e., $N \geq (m+1) (n + L) - 1$ be chosen so that the control input $\hat{u}$ is persistently exciting of order $n + L$ as in Lemma~\ref{lemma:willems}, where $r \coloneqq (m + \ell)L$, $R \coloneqq N - L + 1$. The empirical distribution of such measurements is defined as
	$
		\hat{\mathds{P}}_\kappa = \tfrac{1}{R} \sum_{i \in \mc{R}} \delta_{h_i},
	$
	where $\mc{R} \coloneqq \{1,\ldots,R\}$ and $h_i$ is the $i$-th column of $\mathscr{H}_L$. Our goal is to find a set of locations $\mathscr{S}_L  := \left[s_1 \dots s_S \right]\in \R^{p \times S}$, with $S \leq R$, whose empirical probability distribution $\hat{\mathds{P}}_{s} = \tfrac{1}{S} \sum_{i \in \mc{S}} \delta_{s_i}$, $\mc{S} \coloneqq \{1,\ldots,S\}$, is closest to that of the original dataset. Hence, our problem can be formulated as an optimal transport problem
	\begin{equation}\label{eq:wass_optimal}
		\underset{\mathscr{S}_L}{\textrm{min}} \; \mc{W}(\hat{\mathds{P}}_\kappa, \hat{\mathds{P}}_{s}) = \underset{\mathscr{S}_L}{\textrm{min}} \; \underset{T \in \mc{T}(\bsone_R/R, \bsone_S/S)}{\textrm{min}} \inner{T}{D(\mathscr{H}_L, \mathscr{S}_L)}.
	\end{equation}

 Our optimization problem has a strong practical interpretation. Specifically, an optimal solution to \eqref{eq:wass_optimal}, $\mathscr{S}_L^\star$, is one that produces a distribution $\hat{\mathds{P}}_{s}$ closest to the original one in the Wasserstein distance, and is therefore the one that minimizes the transport cost between the two distributions $\hat{\mathds{P}}_s$ and $\hat{\mathds{P}}_\kappa$. With a slight abuse of notation, we define by $\eta(S)$ the optimal value to \eqref{eq:wass_optimal}, which clearly depends on $S$, the number of samples adopted. This quantity is key in characterizing the robustness properties of the control approach that we are about to introduce.
	The nested optimization program in \eqref{eq:wass_optimal} is a variational Wasserstein problem, representing a particular case of the Wasserstein barycentres problem \cite{cuturi2014fast}. Specifically, it directly falls into the set of $k$-means problems \cite{ng2000note}. In addition, for semi-discrete settings the benefit of adopting the Wasserstein distance as a metric to compare probability measures has been proved in many theoretical and practical problems, from dictionary and statistical learning \cite{rolet2016fast,frogner2015learning}, to vision and image processing \cite{cuturi2016smoothed,lellmann2014imaging}.

	\subsection{On the optimal transport problem \eqref{eq:wass_optimal}}

	\begin{figure}[t!]
		\centering
		\includegraphics[width=.5\columnwidth]{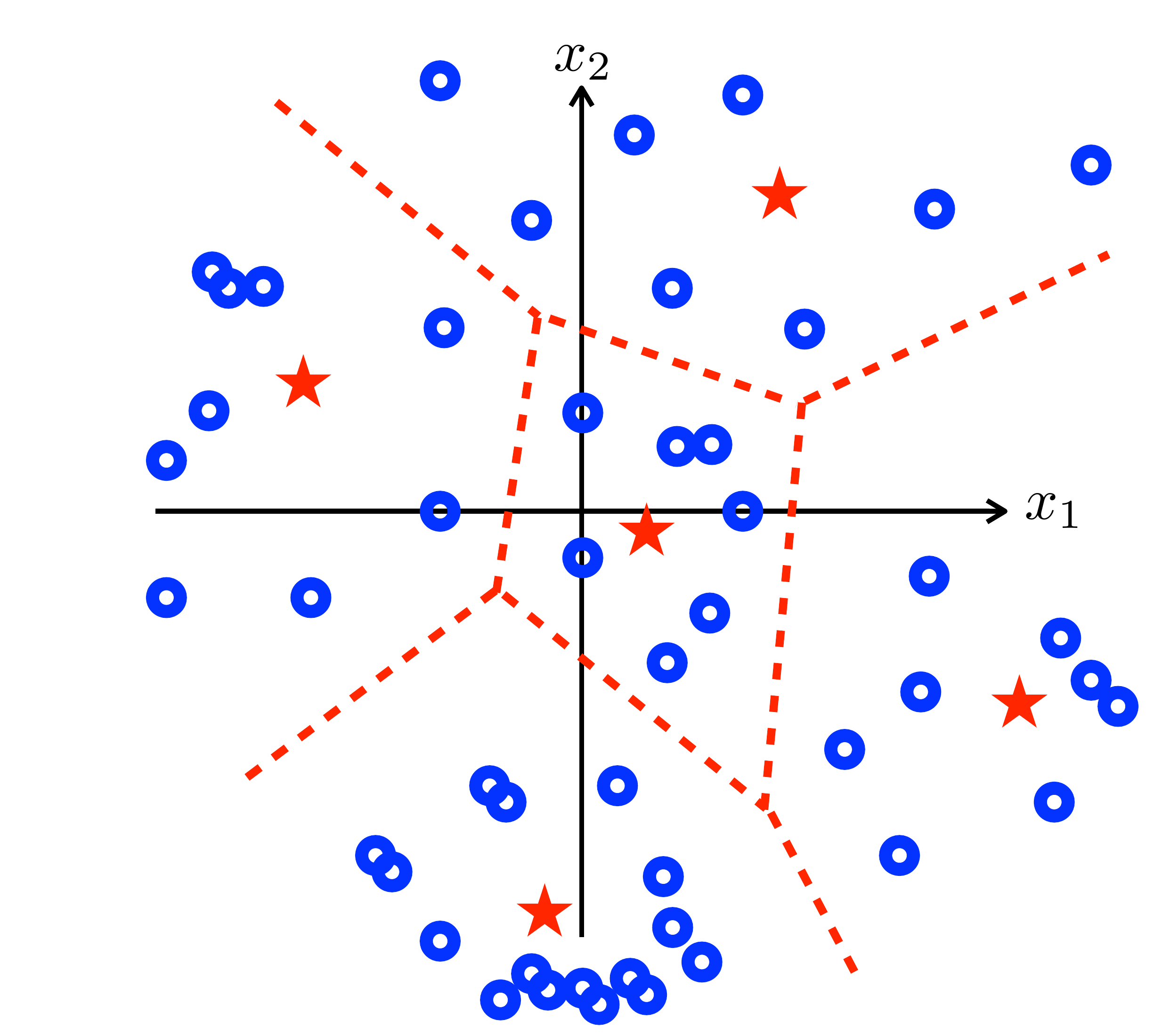}
		\caption{Schematic two-dimensional representation of the Wasserstein barycentres problem in \eqref{eq:barycentres}. Each synthetic atom (red star) identifies a specific barycentre associated to a cluster (dashed red lines) of original samples (blue circles). }
		\label{fig:opt-transport}
	\end{figure}

	Despite its appealing structure and strong practical interpretation, the variational Wasserstein problem in \eqref{eq:wass_optimal} is convex in each single variable, i.e., $\mathscr{S}_L$ and $T$, but not jointly. In practice, this might produce locally optimal solutions, where every atom defining each optimal $\mathscr{S}^\star_L$ identifies a specific barycentre for a subset of samples in $\mathscr{H}_L$. Specifically, in our setting, \eqref{eq:wass_optimal} corresponds to
	\begin{equation}\label{eq:barycentres}
			\underset{\mathscr{S}_L}{\textrm{min}} \; \underset{T \in \mc{T}(\bsone_R/R, \bsone_S/S)}{\textrm{min}} \; \sum_{j \in \mc{S}} \, \sum_{i \in \mc{R}} \, t_{i,j} \|h_i - s_j\|.
	\end{equation}

	For any fixed $j \in \mc{S}$, every $t_{i,j} \geq 0$ defines the quantity of a predefined sample $h_i$ of $\mathscr{H}_L$ that is associated with a barycentre $s_j$ -- see, e.g., Fig.\ref{fig:opt-transport} for a schematic representation. In this specific example, each sample $h_i$ is associated to the $j$-th barycentre only, i.e., $t_{i,j} = 1$ and $t_{i,h} = 0$, for all $h \in \mc{S}\setminus\{j\}$, meaning that there are no overlapping clusters. In principle, as $t_{i,j} \geq 0$, the information associated with any sample $h_i$ may be distributed among several barycentres, thereby producing overlapping clusters. It is easily shown that every atom defining $\mathscr{S}^\star_L$ belongs to $\textrm{conv}(\mathscr{H}_L)$.

	Since the inner minimization problem in \eqref{eq:barycentres} represents the pointwise minimum of linear functions, we note that the Wasserstein distance is not smooth in its arguments.
	To circumvent this problem, the cost function in \eqref{eq:wass_optimal} can be regularized by means of a strictly convex, weighted entropic term, i.e., $\gamma \, \inner{T}{\log(T)}$, for some $\gamma > 0$. The benefits are twofold \cite{cuturi2013sinkhorn}: i) the inner optimization problem in \eqref{eq:wass_optimal} admits a closed form, which translates to a matrix balancing problem, and ii) the Wasserstein distance is differentiable.
	To solve \eqref{eq:barycentres} one may also rely on the convexity in each single variable of the Wasserstein distance, for which possible solution algorithms are, e.g., the typical alternate block-coordinate descent methods \cite{bertsekas1989parallel,beck2013convergence}.

%
%

	\subsection{Robust performance guarantees}
	From \cite[Th.~3.5]{esfahani2018data}, we know that, for any confidence parameter $\beta$, there exists some $\varepsilon > 0$, that depends only on the amount of available data, such that the optimal value of \eqref{eq:distrib_robust} upper bounds the out-of-sample $\expected_{\mathds{P}_\kappa} [f(\kappa, v)]$. Our approach then amounts to first solving the optimal transport problem in \eqref{eq:wass_optimal}, computing a synthetic set of atoms $\mathscr{S}_L$, and then reformulating the robust optimization problem in \eqref{eq:distrib_robust} with a Wasserstein ball centred on $\hat{\mathds{P}}_s$ instead of $\hat{\mathds{P}}_\kappa$, i.e.,
	\begin{equation}\label{eq:distr_robust_synthetic}
				\underset{v \in \mc{V}}{\textrm{inf}} \; \underset{\mathds{Q} \in \mc{B}_{\varepsilon}(\hat{\mathds{P}}_s)}{\textrm{sup}} \, \expected_{\mathds{Q}} [f(\kappa,v)].
	\end{equation}

	The fact that $S < R$ in general, namely we are designing a reduced set of synthetic samples relative to the original dataset, intuitively has two main implications:
	\begin{enumerate}
	\item[i)] The optimal value in \eqref{eq:distr_robust_synthetic} can still achieve the performance bound in \eqref{eq:performance_orig} with high confidence, at a price of considering a larger radius of the ambiguity set $\mc{B}_{\varepsilon}(\hat{\mathds{P}}_s)$ (see Proposition~\ref{prop:main}). We show that the additional term accounting for additional robustness vanishes as the number of atoms tends to the cardinality of the original dataset;
	\item[ii)]  The robust optimization problem in \eqref{eq:distr_robust_synthetic} can be manipulated to obtain a tractable convex reformulation equivalent to the one in \eqref{eq:opt_reformulation}, but defined on a lower dimensional space. In fact, while the optimization variable in \eqref{eq:opt_reformulation} has dimension $R = N - L + 1$, where $N$ is chosen so that the condition on the persistency of excitation is met, which leads to $R \geq (n+1) + \tfrac{1}{m+1} (m N - 1)$, the finite, convex formulation obtained by manipulating \eqref{eq:distr_robust_synthetic} establishes that $v$ simply belongs to $\R^S$.
\end{enumerate}

	We remark that, instead, the number of constraints defining $\mc{V}$ does not change since it depends on the control horizon $L$ (design parameter). Despite this dimensionality mismatch, for simplicity's sake we will keep the same notation in the rest of the paper.

	Next, by relying on the definition of $\eta(S)$ following \eqref{eq:wass_optimal}, we show that an optimal solution to the robust optimization problem in \eqref{eq:distr_robust_synthetic} upper bounds the  out-of-sample performance $\expected_{\mathds{P}_{\kappa}} [f(\kappa,v)]$ with high confidence.

	\smallskip
	\begin{proposition}\label{prop:main}
		Let $\beta \in (0,1)$ be some given confidence parameter, and let $S \leq R$ be fixed. Under Assumption~\ref{ass:light_tailed}, there exists some $\bar{\varepsilon} = \bar{\varepsilon}(\beta, S) > 0$ such that, for all $v \in \mc{V}$,
		$$
		\mathds{P}^S_\kappa \left\{	\expected_{{\mathds{P}_\kappa}}[f(\kappa, v)] \leq \underset{\mathds{Q} \in \mc{B}_{\bar{\varepsilon}}(\hat{\mathds{P}}_s)}{\textrm{sup}}  \expected_{\mathds{Q}} [f(\kappa,v)]	\right\} \geq 1 - \beta.
		$$
		\hfill$\square$
	\end{proposition}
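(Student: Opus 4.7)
The plan is to reduce this statement to the original performance bound in \eqref{eq:performance_orig}, which applies to Wasserstein balls centred at $\hat{\mathds{P}}_\kappa$, and then transfer that bound to balls centred at $\hat{\mathds{P}}_s$ by paying an additional ``transport cost'' precisely equal to $\eta(S)$. The two ingredients I would use are: (i) \cite[Th.~3.5]{esfahani2018data}, which under Assumption~\ref{ass:light_tailed} provides, for any $\beta \in (0,1)$, a data-driven radius $\varepsilon(\beta) > 0$ such that \eqref{eq:performance_orig} holds; and (ii) the triangle inequality for the $1$-Wasserstein distance on discrete measures, recalled just after \eqref{eq:trans_poly}.

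First I would write, for any $\mathds{Q} \in \mc{P}(\Theta)$, the chain
$
\mc{W}(\mathds{Q}, \hat{\mathds{P}}_\kappa) \leq \mc{W}(\mathds{Q}, \hat{\mathds{P}}_s) + \mc{W}(\hat{\mathds{P}}_s, \hat{\mathds{P}}_\kappa) = \mc{W}(\mathds{Q}, \hat{\mathds{P}}_s) + \eta(S),
$
where the equality uses the definition of $\eta(S)$ as the optimal value of \eqref{eq:wass_optimal} evaluated at the synthetic atoms $\mathscr{S}_L^\star$ obtained from the data compression step. Consequently, any $\mathds{Q}$ lying in the ball $\mc{B}_{\varepsilon(\beta)}(\hat{\mathds{P}}_\kappa)$ also lies in the enlarged ball $\mc{B}_{\varepsilon(\beta) + \eta(S)}(\hat{\mathds{P}}_s)$, i.e., $\mc{B}_{\varepsilon(\beta)}(\hat{\mathds{P}}_\kappa) \subseteq \mc{B}_{\varepsilon(\beta) + \eta(S)}(\hat{\mathds{P}}_s)$. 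Taking the supremum of $\expected_{\mathds{Q}}[f(\kappa,v)]$ over a larger set can only increase its value, hence for every $v \in \mc{V}$
$$
\underset{\mathds{Q} \in \mc{B}_{\varepsilon(\beta)}(\hat{\mathds{P}}_\kappa)}{\textrm{sup}} \expected_{\mathds{Q}} [f(\kappa,v)] \leq \underset{\mathds{Q} \in \mc{B}_{\varepsilon(\beta) + \eta(S)}(\hat{\mathds{P}}_s)}{\textrm{sup}} \expected_{\mathds{Q}} [f(\kappa,v)].
$$
Combining this with the probabilistic bound \eqref{eq:performance_orig} and choosing $\bar{\varepsilon}(\beta,S) \coloneqq \varepsilon(\beta) + \eta(S)$ yields the desired inequality on the same probability event, which has measure at least $1-\beta$.

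The main subtlety, and the point I would be most careful about, is that both $\hat{\mathds{P}}_\kappa$ and $\hat{\mathds{P}}_s$ are themselves random (the synthetic atoms are a deterministic function of the sample), so the event in question is measurable and the triangle inequality is applied pathwise for each realisation of the sample; no additional probabilistic argument is needed beyond \cite[Th.~3.5]{esfahani2018data}. A second minor point worth flagging in the write-up is the vanishing behaviour $\eta(S)\to 0$ as $S\to R$, which one obtains by exhibiting the feasible choice $\mathscr{S}_L=\mathscr{H}_L$ (with a permutation transport plan) in \eqref{eq:wass_optimal}, recovering the original result in the limit. This confirms the intuition stated right after item (i) in the discussion preceding the proposition.
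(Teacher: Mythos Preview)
Your argument is essentially the same as the paper's: both combine the triangle inequality for $\mc{W}$ with the concentration result from \cite{esfahani2018data} to obtain $\bar{\varepsilon}=\varepsilon(\beta)+\eta(S)$. The only cosmetic difference is that the paper applies the triangle inequality directly to $\mathds{P}_\kappa$ (invoking \cite[Th.~3.4]{esfahani2018data} to get $\mathds{P}_\kappa\in\mc{B}_{\bar\varepsilon}(\hat{\mathds{P}}_s)$ with probability $1-\beta$), whereas you establish the deterministic ball inclusion $\mc{B}_{\varepsilon(\beta)}(\hat{\mathds{P}}_\kappa)\subseteq\mc{B}_{\bar\varepsilon}(\hat{\mathds{P}}_s)$ and then invoke \eqref{eq:performance_orig}; these are equivalent.

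One small slip to fix: the triangle inequality you display is written in the wrong direction for the inclusion you claim. To show $\mc{B}_{\varepsilon(\beta)}(\hat{\mathds{P}}_\kappa)\subseteq\mc{B}_{\varepsilon(\beta)+\eta(S)}(\hat{\mathds{P}}_s)$ you need $\mc{W}(\mathds{Q},\hat{\mathds{P}}_s)\leq\mc{W}(\mathds{Q},\hat{\mathds{P}}_\kappa)+\mc{W}(\hat{\mathds{P}}_\kappa,\hat{\mathds{P}}_s)$, not the inequality bounding $\mc{W}(\mathds{Q},\hat{\mathds{P}}_\kappa)$ that you wrote. This is a trivial swap and does not affect the substance of the proof.
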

	\begin{proof}
		First, given any $S \leq R$, we recall that $\eta(S)$ corresponds to the Wasserstein distance between the discrete probability distribution associated with the original dataset, $\hat{\mathds{P}}_\kappa$, and the computed synthetic one, $\hat{\mathds{P}}_s$. Thus, the triangle inequality for the Wasserstein metric ensures that the distance between the real, unknown distribution $\mathds{P}_\kappa$ and $\hat{\mathds{P}}_s$ can be upper bounded as follows
		$$
		\mc{W}(\mathds{P}_\kappa, \hat{\mathds{P}}_s) \! \leq \! \mc{W}(\mathds{P}_\kappa, \hat{\mathds{P}}_\kappa) + \mc{W}(\hat{\mathds{P}}_\kappa, \hat{\mathds{P}}_s) \! = \! \mc{W}(\mathds{P}_\kappa, \hat{\mathds{P}}_\kappa) + \eta(S).
		$$
		Moreover,  in view of Assumption~\ref{ass:light_tailed}, it follows from \cite[Th.~3.4]{esfahani2018data} that, for any fixed $\beta \in (0,1)$, there exists some $\varepsilon(\beta) > 0$ such that $\mathds{P}^R_\kappa\{	\mc{W}(\mathds{P}_\kappa, \hat{\mathds{P}}_\kappa) \leq \varepsilon(\beta)	\} \geq 1 - \beta$. Therefore, since $\hat{\mathds{P}}_s$ is an empirical distribution as well, we obtain $\mathds{P}^S_\kappa\{	\mc{W}(\mathds{P}_\kappa, \hat{\mathds{P}}_s) \leq \varepsilon(\beta) + \eta(S)	\} \geq 1 - \beta$. This latter relation, which can be equivalently restated as $\mathds{P}^S_\kappa\{	\mathds{P}_\kappa \in \mc{B}_{\bar{\varepsilon}}(\hat{\mathds{P}}_s) 	\} \geq 1 - \beta$, where $\bar{\varepsilon} \coloneqq \varepsilon(\beta) + \eta(S)$, directly implies $\expected_{{\mathds{P}_\kappa}}[f(\kappa, v)] \leq {\textrm{sup}}_{\mathds{Q} \in \mc{B}_{\bar{\varepsilon}}(\hat{\mathds{P}}_s)}  \expected_{\mathds{Q}} [f(\kappa,v)]$ with probability $1 - \beta$, thus  concluding the proof.
	\end{proof}
	\smallskip

	We note that the Wasserstein distance between $\hat{\mathds{P}}_\kappa$ and the empirical distribution associated with the compressed dataset, $\hat{\mathds{P}}_s$, can be made arbitrarily small by increasing the number of atoms $S$, since $\textrm{min}_{\mathscr{S}_L} \mc{W}(\hat{\mathds{P}}_\kappa, \hat{\mathds{P}}_{s}) \to 0$ as $S \to R$, and hence $\eta(S) \to 0$. In this case, we recover the radius of the ambiguity set in \cite[Th.~4.1]{coulson2019regularized}, although the behaviour of $\eta(S)$ is not monotonically decreasing to $0$ as we will see in the next section.
	Finally, we remark that the optimization problem in \eqref{eq:distr_robust_synthetic} admits a tractable reformulation identical to the one in \eqref{eq:opt_reformulation}, but which can be solved online with significantly lower computational burden. This then paves the way to adopt possibly longer control horizons $L$, thus enhancing the control performance and without compromising the closed-loop stability of the system. These aspects are investigated in the next section.

	\section{Numerical simulations}
		\begin{algorithm}[!t]
		\caption{Receding horizon robust synDeePC}\label{alg:deepc}
		\DontPrintSemicolon
		\SetArgSty{}
		\SetKwFor{ForAll}{for all}{do}{end forall}
		\textbf{Offline:} Given $\mathscr{H}_{K_i + K}$, set $S \leq N - (K_i + K) + 1$, compute $\mathscr{S}^\star_{K_i + K} \in \textrm{argmin}_{\mathscr{S}_{K_i + K}} \; \mc{W}(\hat{\mathds{P}}_\kappa, \hat{\mathds{P}}_{s})$\\
		\smallskip
		\hrule width.93\columnwidth
		\smallskip
		\textbf{Initialization:} Set $\hat{\mc{V}}(0)$, $\hat{\kappa}(0)$ and $\bar{\varepsilon}$\\
		\smallskip
		\textbf{Iteration $(k \in \N)$:} \\
		\begin{itemize}\setlength{\itemindent}{.3cm}
			\smallskip
			\item[(\texttt{S1})] Compute
			$$
			v^\star(k) \coloneqq \underset{v \in \mc{V}	(k)}{\textrm{argmin}} \; \left[f(\hat{\kappa}(k), v) +  \bar{\varepsilon} \cdot \textrm{max} \left( \textrm{sup}_{\xi \in \Xi} \; \|\xi\|_\infty \| \col(g,0) \|_\ast, \rho \|v\|_\ast \right)\right]
			$$

			\smallskip

			\item[(\texttt{S2})] Set $u^\star(k) = \mathscr{U}_f v^\star(k)$, apply $u_1^\star(k)$

			\smallskip

			\item[(\texttt{S3})] Collect measurements, update $\mc{V}(k\!+\!1)$, $\hat{\kappa}(k\!+\!1)$
		\end{itemize}
	\end{algorithm}
	\begin{table}[!t]
		\caption{Main simulation parameters}\label{tab:sim_param}
		\begin{center}
			\begin{tabular}{ccccccccc}
				\toprule
				$K$ & $K_i$ & $N$  & $T_s$ & $c$	& $\rho$	& $\epsilon(\beta)$\\
				\midrule
				$30$ & $1$ & $214$ & $0.05$ & $200$  & $10^5$ & $10^{-3}$ \\
				\bottomrule
			\end{tabular}
		\end{center}
	\end{table}
	In this section we apply the \gls{DeePC} method presented in \cite{coulson2019regularized} with an additional, offline step that computes a synthetic dataset, thus renamed synDeePC (Algorithm~\ref{alg:deepc}). Specifically, we compare control and computational performance when considering the original dataset and a compressed one when steering a linear model of a quadcopter in a receding horizon fashion.

	Simulations are run in Matlab by using Gurobi \cite{gurobi} as a solver for (\texttt{S1}) in Algorithm~\ref{alg:deepc}, on a laptop with a Quad-Core Intel Core i5 2.4 GHz CPU and 8 Gb RAM. The main parameters adopted are summarized in Table~\ref{tab:sim_param}.

	The linear model adopted is valid around a hover position, where the state vector is $\col(x,y,z,\dot{x},\dot{y},\dot{z},\phi,\theta,\psi,\dot{\phi},\dot{\theta},\dot{\psi}) \in \R^{12}$. Here, $x$, $y$ and $z$ are the three spatial coordinates and relative velocities ($\dot{x},\dot{y},\dot{z}$), while $\phi$, $\theta$, and $\psi$ are the angular ones, with relative rates ($\dot{\phi},\dot{\theta},\dot{\psi}$). The control inputs are represented by four identical rotors, constrained to the set $\mc{U} = [-0.7007, 0.2993]^{4(K_i + K)}$ due to physical limitations. By assuming full state measurement, we use the same state-space matrices adopted in \cite[\S V]{coulson2019regularized}, as well as same original cost function, $J(u,y) = \|u\|_1 + c \|y - r\|_1$, where $r$ denotes a parametrized, 8-figure trajectory with fixed altitude. The parameter $T_s$ in Table~\ref{tab:sim_param} represents the temporal resolution with which the reference trajectory is sampled (i.e., the sampling time). Moreover, with the adopted values, the $184$ columns of the matrix $\mathscr{H}_{31}$ are filled by means of random inputs drawn from a uniform distribution on $\mc{U}$: this is to guarantee the persistency of excitation for the (syn)DeePC, according to Definition~\ref{def:persistent}.

	\begin{figure}
		\centering
		{\includegraphics[width=0.5\columnwidth,trim=0 0 0 1cm]{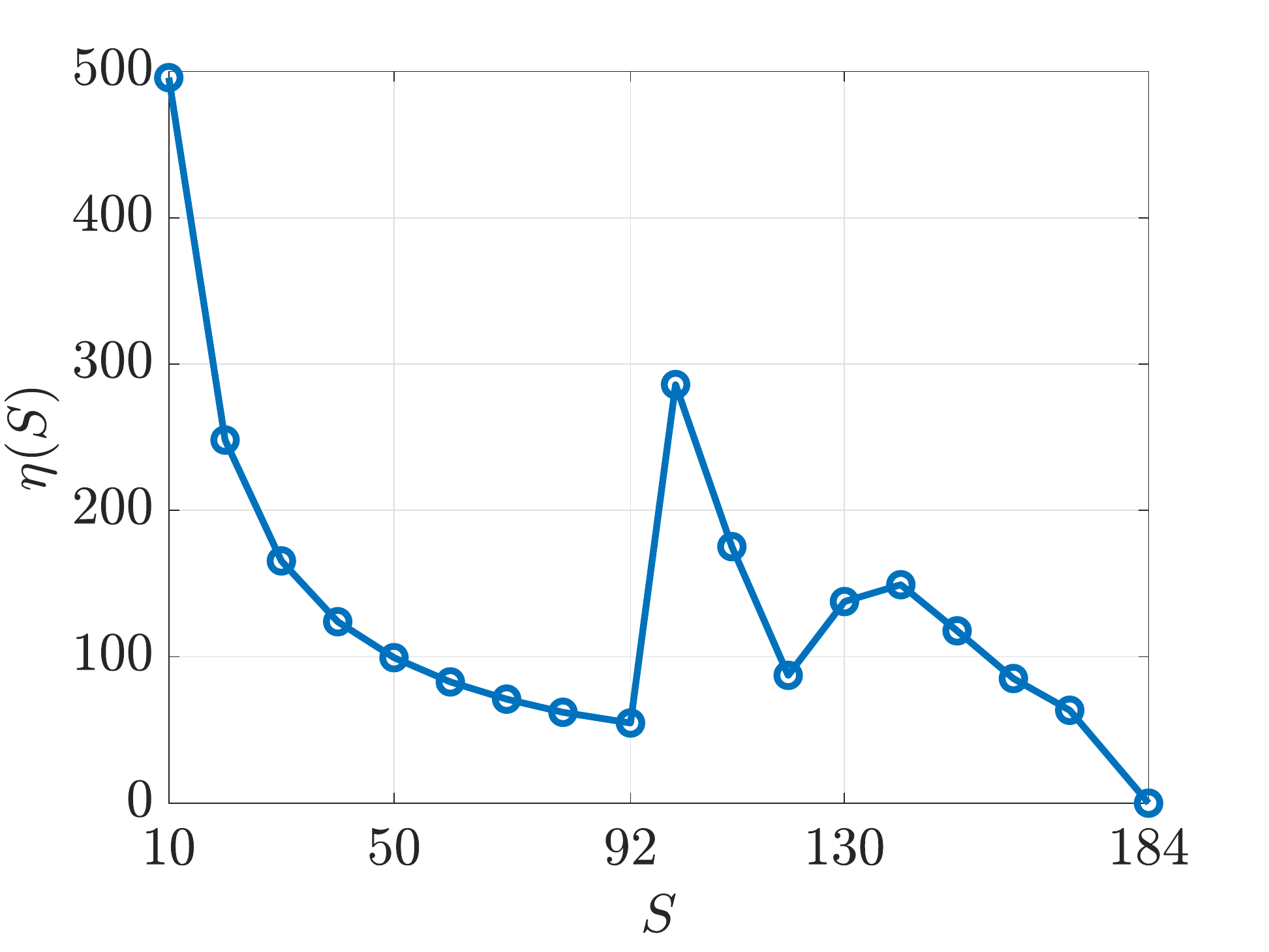}}
		\caption{Wasserstein distance vs.\ number of synthetic atoms.}
		\label{fig:cost_evol}
	\end{figure}

	Some a-priori considerations on the choice of the parameter $S$ in Algorithm~\ref{alg:deepc} can be made, e.g., in a data-driven fashion. Specifically, we evaluate the behaviour of the Wasserstein distance $\mc{W}(\hat{\mathds{P}}_\kappa, \hat{\mathds{P}}_s)$ when $S$ varies, solving the offline step without regularization by means of a standard block-descent algorithm \cite{bertsekas1989parallel}. Thus, according to Fig.~\ref{fig:cost_evol}, the function $\eta(S)$ takes reasonable values for $S \leq 92$, leading to an offline step in Algorithm~\ref{alg:deepc} taking less than three minutes. Interestingly, we note that the effect of the local minima seems to prevent the Wasserstein distance from decreasing monotonically for values of $S > 92$, i.e., $R/2$, as in Fig.~\ref{fig:cost_evol}.

	\begin{figure}
		\centering
		{\includegraphics[width=0.5\columnwidth,trim = 0 0 0 1cm]{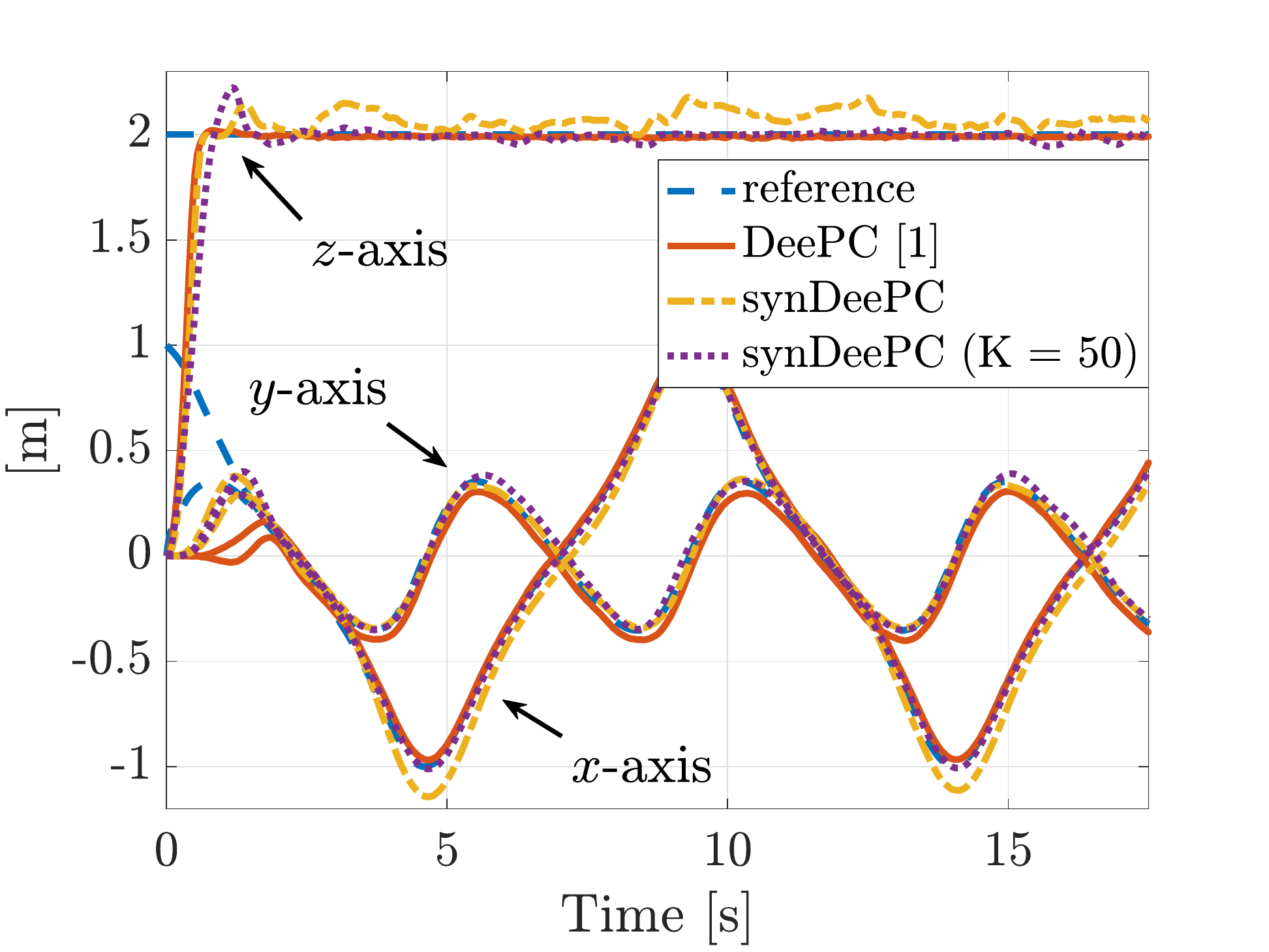}}
		\caption{Dynamical evolution of the controlled quadcopter while following a figure-8 trajectory.}\label{fig:tracking}
	\end{figure}
	\begin{figure}
		\centering
		\includegraphics[width=0.5\columnwidth]{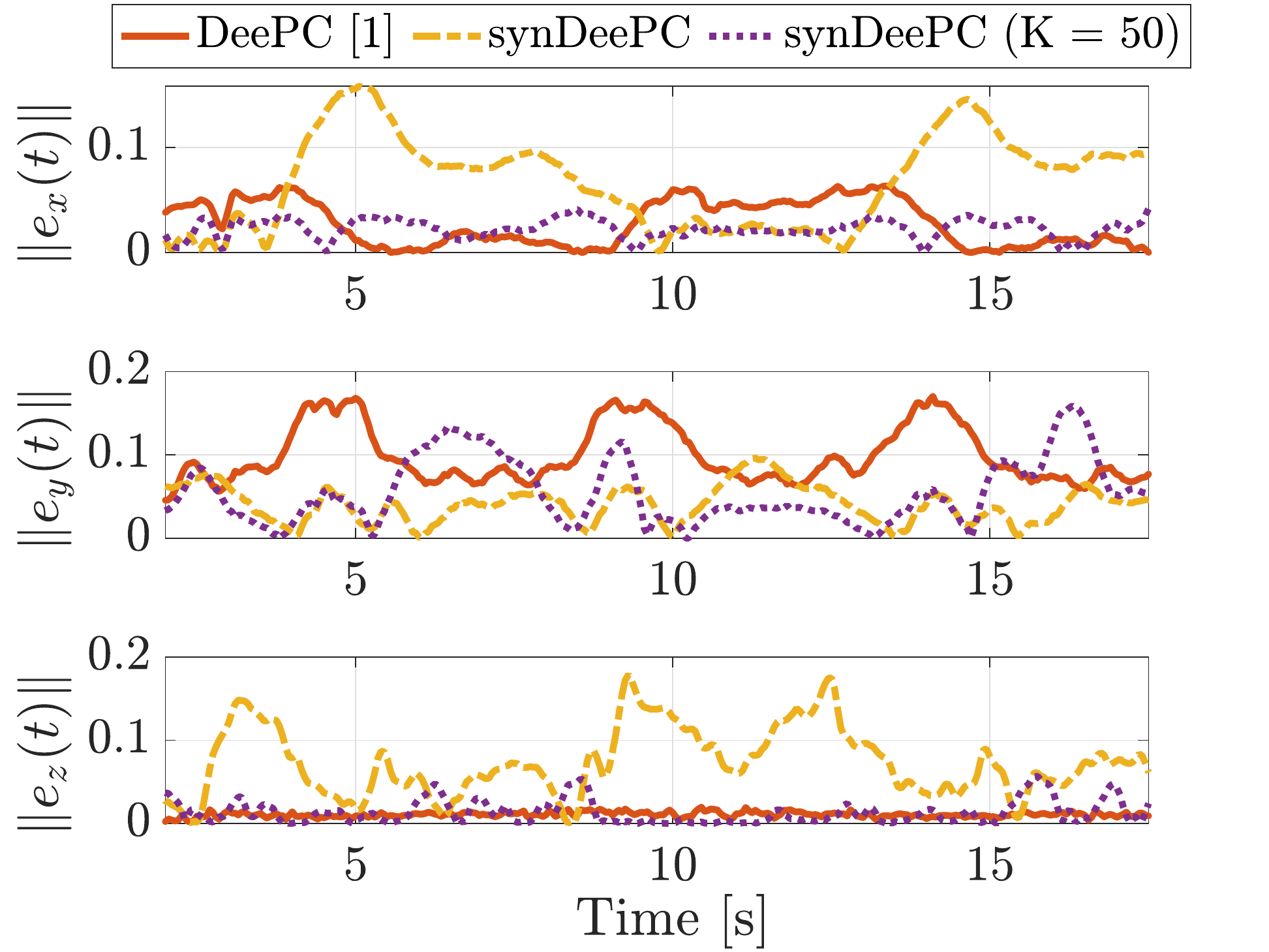}
		\caption{Spatial coordinate tracking errors.}\label{fig:tracking_errors}
	\end{figure}

	\begin{figure}
	\centering
	\includegraphics[width=0.5\columnwidth]{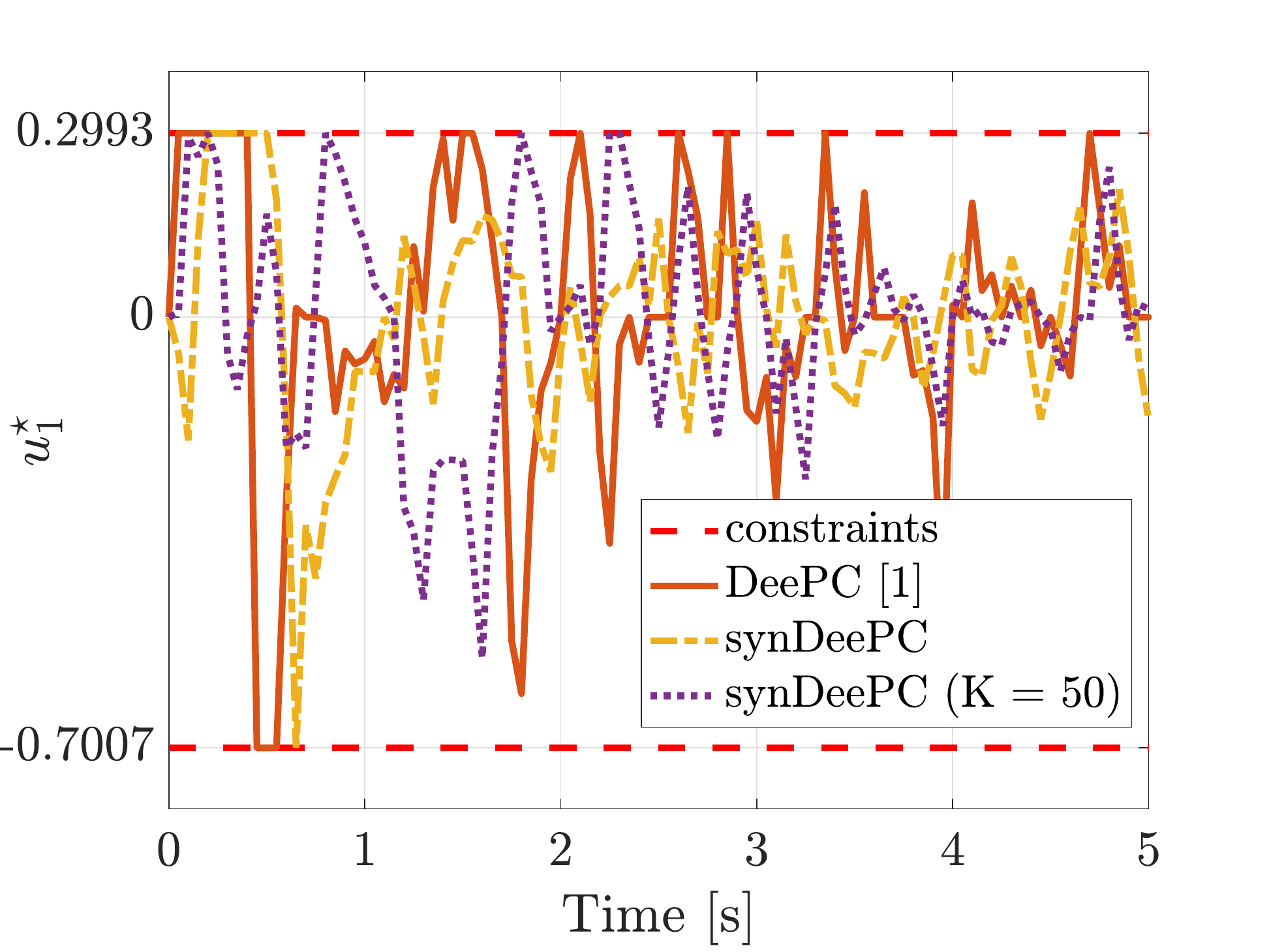}
	\caption{First element of the control input $u^\star$.}\label{fig:control_input}
	\end{figure}

	In Fig.~\ref{fig:tracking} and \ref{fig:tracking_errors}, we compare the trajectory tracking performances of the quadrotor controlled by means of the \gls{DeePC} with full dataset $\mathscr{H}_{31}$ (solid lines) and synthetic dataset, $\mathscr{S}_{31}$, obtained first by reducing to the 50$\%$ the total number of samples, i.e., $S = 92$ (solid-dashed lines), which also correspond to a reduction of the 70$\%$ w.r.t. the total number of samples when considering a longer control horizon, i.e., $K = 50$ (dotted lines). An example of a typical constrained input signal for the rotors can be found in Fig.~\ref{fig:control_input}, where the behaviour of the first element of $u^\star$ in all three cases is shown. Here, we used an equivalent statistic for the noise acting on each measurement channel, i.e., $\nu \sim \mc{N}(0,2^{-7})$, whose value is chosen to match the experimental setup in \cite{elokda2019data}.
	As shown in Fig.~\ref{fig:tracking_errors}, where the position errors of the spatial coordinates are illustrated, the performances of the robust controller computed by means of the synthetic dataset $\mathscr{S}_{31}$ do not degrade markedly compared with the one computed by means of $\mathscr{H}_{31}$, also exhibiting an almost overlapping behaviour when the control horizon $K$ increases ($\mathscr{S}_{51}$). Moreover, from our numerical experience, the step (\texttt{S1}) in Algorithm~\ref{alg:deepc} with a compressed dataset takes approximately $0.64$[s] on average, in sharp contrast to the $1.73$[s] required by the original dataset, see Fig.~\ref{fig:CPU_time_comparison}. On the other hand, a longer control horizon does not lead to a much higher computational time, i.e., $0.85$[s], while considering the whole dataset would take around $3$[s] to solve the \gls{DeePC} optimization problem.

	\begin{figure}
		\centering
		\includegraphics[width=0.5\columnwidth]{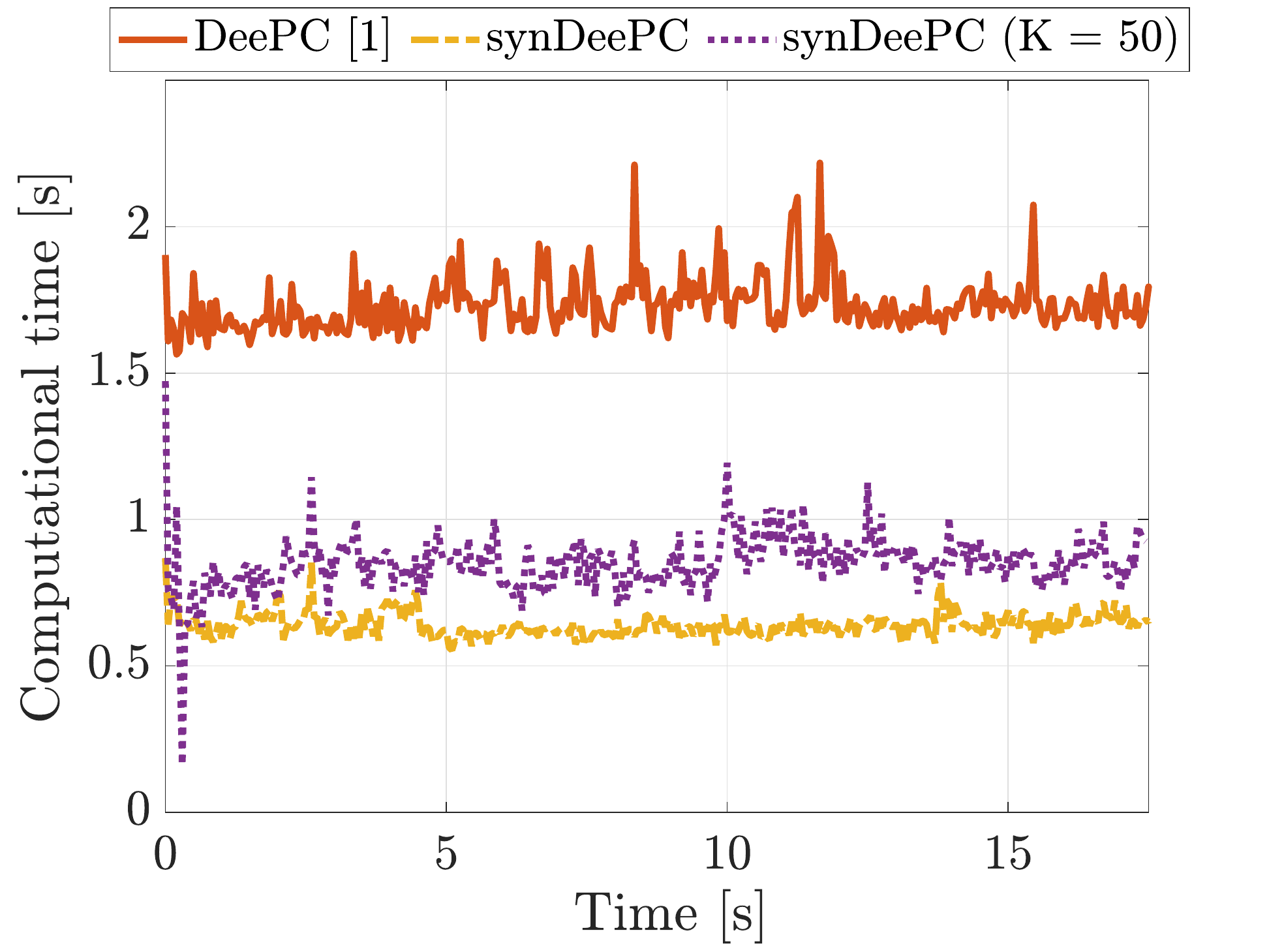}
		\caption{Computational time with Gurobi \cite{gurobi} over the whole trajectory tracking control problem.}\label{fig:CPU_time_comparison}
	\end{figure}

	\begin{figure}
		\centering
		\includegraphics[width=0.5\columnwidth,trim = 0 0 0 0.5cm]{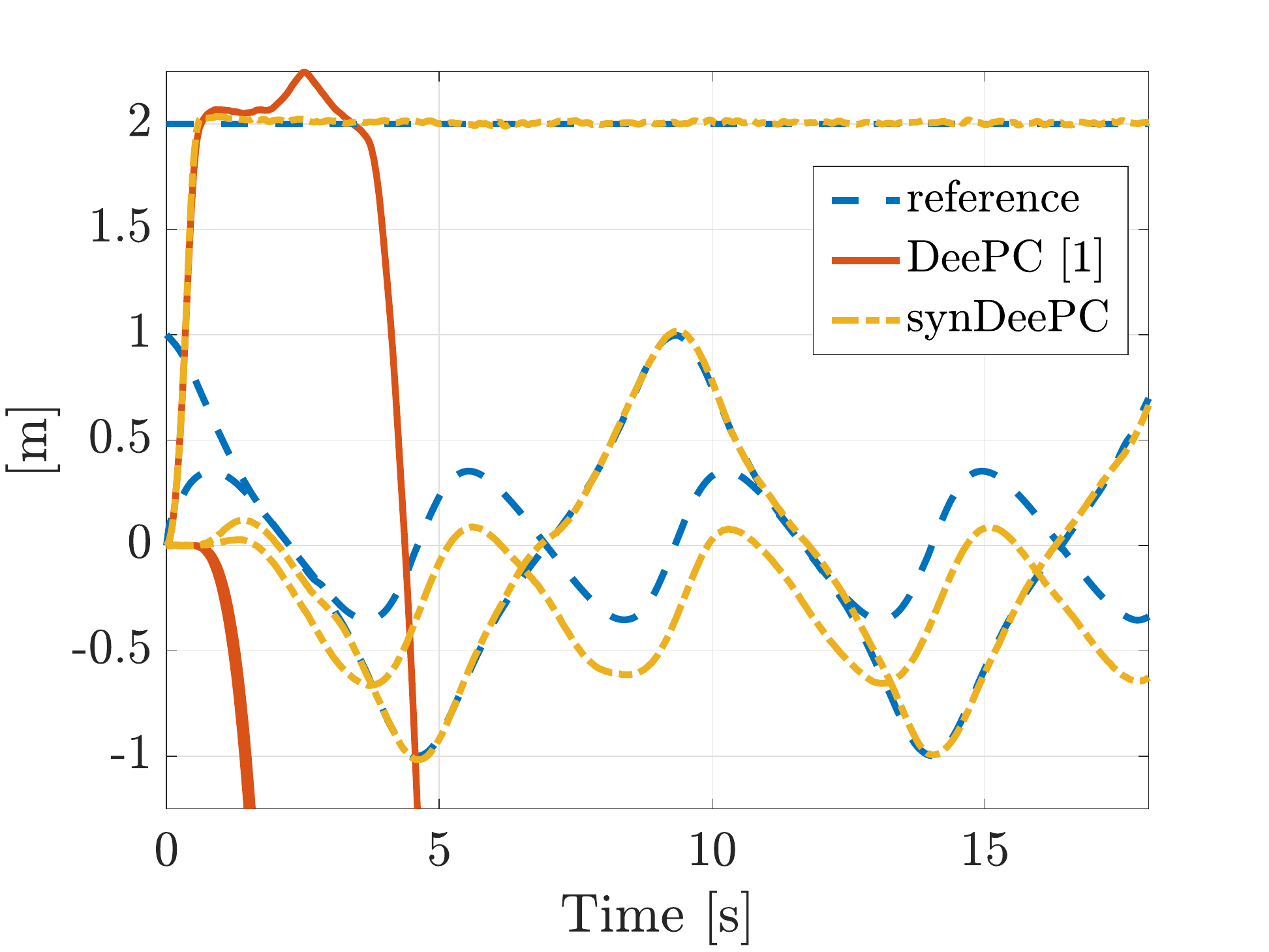}
		\caption{Trajectory tracking performance of the controlled quadcopter with a higher level of noise, $\nu \sim \mc{N}(0,1^{-3})$.}\label{fig:higher_noise}
	\end{figure}

	Finally, we investigate how a higher level of noise acting on the measurements reflects on the control performance. Specifically, we assume an equivalent statistic on the channels, i.e., $\nu \sim \mc{N}(0,1^{-6})$. In this case, with $N = 214$ and main parameters as in Tab.~\ref{tab:sim_param}, we experienced that both controllers do not accomplish the trajectory tracking problem, directly leading to instability. For this reason, we decide to collect $3N$ measurements. As shown in Fig.~\ref{fig:higher_noise}, while the original \gls{DeePC} is not able to follow the figure-8 reference from the very beginning (solid lines), the syn\gls{DeePC} in Algorithm~\ref{alg:deepc} with $S = 306$ introduces an offset on the $y$-axis tracking only, while keeping good performance on both $x$ and $z$ axes (solid-dashed lines).

	\section{Conclusion and Outlook}
	The optimal transport approach promises to be a key tool for the design of synthetic datasets guaranteeing both robust performance and reasonable computational burden for real-time implementation of data-driven controllers. Specifically, we have investigated the benefits of adopting the Wasserstein metric to compress the informative content of a large dataset into a smaller one, also illustrating the performance of the robust controller obtained by means of the synthetic dataset compared to the original one.
	Future research directions will focus on the impact that the discrete measure adopted to compare the empirical distribution associated with the original data, i.e., the vector $\beta$ in \eqref{eq:discrete_transp}, has on the robustness. Intuitively, if one was to allow variable weights in $\beta$, we envision that Fig.~\ref{fig:cost_evol} would be necessarily monotone, assuming one could solve to global optimality. The reason is that the $(S+1)$-th point could always be placed at an arbitrary location with zero mass, and this would require the same Wasserstein distance as with $S$ synthetic points only. It would also mean that $\eta(S) = 0$ for all $S > R$. Moreover, given the input/output structure of the gathered data, we will investigate also the possibility to use different distances to define the Wasserstein metric, as well as a jointly convex reformulation of the whole program in \eqref{eq:wass_optimal}.

	\balance
	\bibliographystyle{IEEEtran}
	\bibliography{ACC_data_driven}

\end{document}